\providecommand{\abs}[1]{\left\lvert#1\right\rvert} 
\providecommand{\norm}[1]{\left\lVert#1\right\rVert} 
\providecommand{\normp}[3]{\left\lVert#1\right\rVert_{#2,#3}} 
\DeclareMathOperator*{\argmin}{\arg\!\min}
\newcommand{\cB}{\mathcal{B}}
\newcommand{\cI}{\mathcal{I}}
\newcommand{\cL}{\mathcal{L}}
\newcommand{\cP}{\mathcal{P}}
\newcommand{\cT}{\mathcal{T}}
\newtheorem{definition}{Definition}
\newtheorem{theorem}{Theorem}
\newtheorem{lemma}{Lemma}
\newcommand{\unaryminus}{\textrm{-}}
\begin{document}

%
\runningtitle{Learning Nash Equilibrium for General-Sum Markov Games from Batch Data}

\runningauthor{Julien Pérolat, Florian Strub, Bilal Piot, Olivier Pietquin}

\twocolumn[
\aistatstitle{Learning Nash Equilibrium for General-Sum\\ Markov Games from Batch Data}

\aistatsauthor{ Julien Pérolat\textsuperscript{(1)} \\ \small julien.perolat@ed.univ-lille1.fr\normalsize  \And Florian Strub\textsuperscript{(1)} \\ \small florian.strub@ed.univ-lille1.fr\normalsize \And Bilal Piot\textsuperscript{(1,3)} \\ \small bilal.piot@univ-lille1.fr\normalsize \And Olivier Pietquin\textsuperscript{(1,2,3)} \\ \small olivier.pietquin@univ-lille1.fr\normalsize}

\aistatsaddress{ \textsuperscript{(1)}Univ. Lille, CNRS, Centrale Lille, Inria, UMR 9189 - CRIStAL, F-59000 Lille, France\\
\textsuperscript{(2)}Institut Universitaire de France (IUF), France \\
\textsuperscript{(3)}DeepMind, UK}]

\begin{abstract}
This paper addresses the problem of learning a Nash equilibrium in $\gamma$-discounted multiplayer general-sum Markov Games (MGs) in a batch setting. As the number of players increases in MG, the agents may either collaborate or team apart to increase their final rewards. One solution to address this problem is to look for a Nash equilibrium. Although, several techniques were found for the subcase of two-player zero-sum MGs, those techniques fail to find a Nash equilibrium in general-sum Markov Games. 
In this paper, we introduce a new definition of $\epsilon$-Nash equilibrium in MGs which grasps the strategy's quality for multiplayer games. We prove that minimizing the norm of two Bellman-like residuals implies to learn such an $\epsilon$-Nash equilibrium. Then, we show that minimizing an empirical estimate of the $\cL_p$ norm of these Bellman-like residuals allows learning for general-sum games within the batch setting. Finally, we introduce a neural network architecture that successfully learns a Nash equilibrium in generic multiplayer general-sum turn-based MGs.
\end{abstract}
\vspace{-0.8em}
\section{Introduction}
\vspace{-0.8em}
A Markov Game (MG) is a model for Multi-Agent Reinforcement Learning (MARL)~\cite{littman1994markov}. Chess, robotics or human-machine dialogues are a few examples of the myriad of applications that can be modeled by an MG. At each state of an MG, all agents have to pick an action simultaneously. As a result of this mutual action, the game moves to another state and each agent receives a reward. 
In this paper, we focus on presently the most generic MG framework with perfect information: general-sum multiplayer MGs. This framework includes MDPs (which are single agent MGs), zero-sum two-player games, normal form games, extensive form games and other derivatives~\cite{nisan2007algorithmic}. 
For instance, general-sum multiplayer MGs have neither a specific reward structure as opposed to zero-sum two-player MGs~\cite{shapley1953stochastic,perolat} nor state space constraints such as the tree structure in extensive form games~\cite{nisan2007algorithmic}. 
More importantly, when it comes to MARL, using classic models of reinforcement learning (MDP or Partially Observable MDP) assumes that the other players are part of the environment. Therefore, other players have a stationary strategy and do not update it according to the current player.
General-sum multiplayer MGs remove this assumption by allowing the agents to dynamically re-adapt their strategy according the other players. Therefore, they may either cooperate or to confront in different states which entails joint-strategies between the players.


In this paper, agents are not allowed to directly settle a common strategy beforehand. They have to choose their strategy secretly and independently to maximize their cumulative reward.
In game theory, this concept is traditionally addressed by searching for a Nash equilibrium~\cite{filar2012competitive}. Note that different Nash equilibria may co-exist in a single MG. 
Computing a Nash equilibrium in MGs requires a perfect knowledge of the dynamics and the rewards of the game. In most practical situations, the dynamics and the rewards are unknown. To overcome this difficulty two approaches are possible: 
Either learn the game by interacting with other agents or extract information from the game through a finite record of interactions between the agents. The former is known as the online scenario while the latter is called the batch scenario. This paper focuses on the batch scenario.
Furthermore, when a player learns a strategy, he faces a representation problem. In every state of the game (or configuration), he has to store a strategy corresponding to his goal. However, this approach does not scale with the number of states and it prevents from generalizing a strategy from one state to another. Thus, function approximation is introduced to estimate and generalize the strategy.



First, the wider family of batch algorithms applied to MDPs builds upon approximate value iteration~\cite{riedmiller2005neural,munos2008finite} or approximate policy iteration~\cite{lagoudakis2003reinforcement}. Another family relies on the direct minimization of the optimal Bellman residual~\cite{baird1995residual,Piot_DC_NIPS}. These techniques can handle value function approximation and have proven their efficacy in large MDPs especially when combined with neural networks~\cite{riedmiller2005neural}. The batch scenario is also well studied for the particular case of zero-sum two-player MGs. It can be solved with analogue techniques as the ones for MDPs~\cite{lagoudakis2002value,perolat,perolat_non-stat_AISTATS}. 
All the previously referenced algorithms use the value function or the state-action value function to compute the optimal strategy. However, \cite{zinkevich2006cyclic} demonstrates that no algorithm based on the value function or on the state-action value function can be applied to find a stationary-Nash equilibrium in general-sum MGs. Therefore, recent approaches to find a Nash equilibrium consider an optimization problem on the strategy of each player~\cite{prasad2015two} in addition to the value functions. Finding a Nash equilibrium is well studied when the model of the MG is known~\cite{prasad2015two} or estimated~\cite{akchurina2010multi} or in the online scenario~\cite{littman2001friend,prasad2015two}. However, none of the herein-before algorithms approximates the value functions or the strategies. To our knowledge, using function approximation to find a Nash-equilibrium in MGs has not been attempted before this paper.

The key contribution of this paper is to introduce a novel approach to learn an $\epsilon$-Nash equilibrium in a general-sum multiplayer MG in the batch setting. This contribution relies on two ideas: the minimization of two Bellman-like residuals and the definition of the notion of weak $\epsilon$-Nash equilibrium. Again, our approach is the first work that considers function approximation to find an $\epsilon$-Nash equilibrium. As a second contribution, we empirically evaluate our idea on randomly generated deterministic MGs by designing a neural network architecture. This network minimizes the Bellman-like residuals by approximating the strategy and the state-action value-function of each player. 

These contributions are structured as follows: first, we recall the definitions of an MG, of a Nash equilibrium and we define a weaker notion of an $\epsilon$-Nash equilibrium. Then, we show that controlling the sum of several Bellman residuals allows us to learn an $\epsilon$-Nash equilibrium. Later, we explain that controlling the empirical norm of those Bellman residuals addresses the batch scenario problem. At this point, we provide a description of the NashNetwork. Finally, we empirically evaluate our method on randomly generated MGs (Garnets). Using classic games would have limited the scope of the evaluation. Garnets enable to set up a myriad of relevant configuration and they avoid drawing conclusion on our approach from specific games.



\vspace{-0.8em}
\section{Background}
\vspace{-0.8em}
In an $N$-player MG with a finite state space $S$ all players take actions in a finite set $((A^i (s))_{s \in S} )_{i \in \{1,...,N \}}$ depending on the player and on the state. In every state $s$, each player has to take an action within his/her action space $a^i \in A^i(s)$ and receives a reward signal $r^i (s, a^1 , ..., a^N)$ which depends on the joint action of all players in state $s$. Then, we assume that the state changes according to a transition kernel $p(s' |s, a^1, ..., a^N)$.
All actions indexed by $-i$ are joint actions of all players except player $i$ (i.e. $\bm{a^{\unaryminus i}} = (a^1,\dots,a^{i-1},a^{i+1},\dots,a^N)$).
For the sake of simplicity we will note $\mathbf{a} = (a^1,\dots,a^N) = (a^i , \bm{a^{\unaryminus i}} )$, $p(s' |s, a^1, ..., a^N) = p(s' |s, \mathbf{a}) = p(s' |s, a^i , \bm{a^{\unaryminus i}} )$ and $r^i (s, a^1, ..., a^N) = r^i (s, \mathbf{a}) = r^i (s, a^i , \bm{a^{\unaryminus i}} )$. 
The constant $\gamma \in [0, 1)$ is the discount factor.

In a MARL problem, the goal is typically to find a strategy for each player (a policy in the MDP literature). A stationary strategy $\pi^i$ maps to each state a distribution over the action space $A^i(s)$. The strategy $\pi^i(.|s)$ is such a distribution. Alike previously, we will write $\bm{\pi} = (\pi^1,...,\pi^N) = (\pi^i,\bm{\pi^{\unaryminus i}})$ the product distribution of those strategies (i.e. $\bm{\pi^{\unaryminus i}} = (\pi^1,\dots,\pi^{i-1},\pi^{i+1},\dots,\pi^N)$).
The stochastic kernel defining the dynamics of the Markov Chain when all players follow their strategy $\bm{\pi}$ is $\cP_{\bm{\pi}} (s'|s) = E_{\mathbf{a} \sim \bm{\pi}}[ p(s'|s,\mathbf{a})]$ and the one defining the MDP when all players but player $i$ follow their strategy $\bm{\pi^{\unaryminus i}}$ is $\cP_{\bm{\pi^{\unaryminus i}}} (s'|s, a^i) = E_{\bm{a^{\unaryminus i}} \sim \bm{\pi^{\unaryminus i}}}[ p(s'|s,\mathbf{a})]$. The kernel $\cP_{\bm{\pi}}$ can be seen as a squared matrix of size the number of states$|S|$. 
Identically, we can define the averaged reward over all strategies $r^i_{\bm{\pi}} (s) = E_{\mathbf{a} \sim \bm{\pi}}[ r^i(s,\mathbf{a})]$ and the averaged reward over all players but player $i$, $r^i_{\bm{\pi^{\unaryminus i}}} (s, a^i) = E_{\bm{a^{\unaryminus i}} \sim \bm{\pi^{\unaryminus i}}}[ r^i(s,\mathbf{a})]$. Again, the reward $r^i_{\bm{\pi}}$ can be seen as a vector of size $|S|$. Finally, the identity matrix is noted $\cI$.

\paragraph{Value of the joint strategy :}
For each state $s$, the expected return considering each player plays his part of the joint strategy $\bm{\pi}$ is the $\gamma$-discounted sum of his rewards \cite{prasad2015two,filar2012competitive}:

\small
\begin{align*}
v^i_{\bm{\pi}} (s) &= E [\sum \limits_{t=0}^{+\infty} \gamma^t r^i_{\bm{\pi}}(s_t)|s_0 = s, s_{t+1} \sim \cP_{\bm{\pi}}(.|s_t)],\\
&= (\cI - \gamma \cP_{\bm{\pi}})^{-1}r^i_{\bm{\pi}}.
\end{align*}
\normalsize
The joint value of each player is the following: $\bm{v}_{\bm{\pi}} = (v^1_{\bm{\pi}},...,v^N_{\bm{\pi}})$. In the MG theory, two Bellman operators can be defined on the value function with respect to player $i$. The first one is $\cT^i_{\mathbf{a}} v^i \; (s) = r^i(s,\mathbf{a}) + \gamma \sum \limits_{s' \in S} p(s'|s,\mathbf{a}) v^i(s')$. It represents the expected value player $i$ will get in state $s$ if all players play the joint action $\bm{a}$ and if the value for player $i$ after one transition is $v^i$.
If we consider that instead of playing action $\bm{a}$ every player plays according to a joint strategy $\bm{\pi}$ the Bellman operator to consider is $\cT^i_{\bm{\pi}} v^i \; (s)= E_{\mathbf{a} \sim \bm{\pi}} [\cT^i_{\mathbf{a}} v^i\;(s)] = r^i_{\bm{\pi}}(s) + \gamma \sum \limits_{s' \in S} \cP_{\bm{\pi}} (s'|s) v^i(s')$. As in MDPs theory, this operator is a $\gamma$-contraction in $\cL_{+\infty}$-norm~\cite{puterman2014markov}. Thus, it admits a unique fixed point which is the value for player $i$ of the joint strategy $\bm{\pi}$ : $v^i_{\bm{\pi}}$.

\paragraph{Value of the best response :}
When the strategy $\bm{\pi^{\unaryminus i}}$ of all players but player $i$ is fixed, the problem is reduced to an MDP of kernel $\cP_{\bm{\pi^{\unaryminus i}}}$ and reward function $r^i_{\bm{\pi^{\unaryminus i}}}$. In this underlying MDP, the optimal Bellman operator would be $\cT^{*i}_{\bm{\pi^{\unaryminus i}}} v^i \;(s) = \max \limits_{a^i} E_{\bm{a^{\unaryminus i}} \sim \bm{\pi^{\unaryminus i}}}[\cT^i_{a^i,\bm{a^{\unaryminus i}}} v^i \;(s)] =  \max \limits_{a^i} [r^i_{\bm{\pi^{\unaryminus i}}} (s, a^i) + \gamma \sum \limits_{s' \in S} \cP_{\bm{\pi^{\unaryminus i}}} (s'|s, a^i) v^i(s')]$. The fixed point of this operator is the value of a best response of player $i$ when all other players play $\bm{\pi^{\unaryminus i}}$ and will be written $v^{*i}_{\bm{\pi^{\unaryminus i}}} = \max \limits_{\tilde{\pi}^i} v^i_{\tilde{\pi}^i,\bm{\pi^{\unaryminus i}}}$.
%
\vspace{-0.8em}
\section{Nash, $\epsilon$-Nash and Weak $\epsilon$-Nash Equilibrium}
\vspace{-0.8em}
A Nash equilibrium is a solution concept well defined in game theory. It states that one player cannot improve his own value by switching his strategy if the other players do not vary their own one~\cite{filar2012competitive}. The goal of this paper is to find one strategy for players which is as close as possible to a Nash equilibrium.
\begin{definition}
\label{DefNash}
In an MG, a strategy $\bm{\pi}$ is a Nash equilibrium if: 
\small
$\forall i \in \{1,...,N\}, \; v_{\bm{\pi}}^i = v^{*i}_{\bm{\pi^{\unaryminus i}}}.$
\normalsize
\end{definition}
%
%
%
This definition can be rewritten with Bellman operators:
\begin{definition}
\label{DefNashBellman}
In an MG, a strategy $\bm{\pi}$ is a Nash equilibrium if $\exists \bm{v}$ such as $\forall i \in \{1,...,N\}, \cT^i_{\bm{\pi}} v^i = v^i \textrm{ and } \cT^{*i}_{\bm{\pi^{\unaryminus i}}} v^i = v^i$.
\end{definition}
\begin{proof}
The proof is left in appendix \ref{proofEqDef}.
\end{proof}
One can notice that, in the case of a single player MG (or MDP), a Nash equilibrium is simply the optimal strategy. An $\epsilon$-Nash equilibrium is a relaxed solution concept in game theory. When all players play an $\epsilon$-Nash equilibrium the value they will receive is at most $\epsilon$ sub-optimal compared to a best response. Formally~\cite{filar2012competitive}:
\begin{definition}
In an MG, a strategy $\bm{\pi}$ is an $\epsilon$-Nash equilibrium if:\newline
$\forall i \in \{1,...,N\}, \; v_{\bm{\pi}}^i + \epsilon \geq v^{*i}_{\bm{\pi^{\unaryminus i}}}$\newline
or $\forall i \in \{1,...,N\}, \; v^{*i}_{\bm{\pi^{\unaryminus i}}} - v_{\bm{\pi}}^i \leq \epsilon,$\newline
which is equivalent to:
\small$\normp{\normp{v^{*i}_{\bm{\pi^{\unaryminus i}}} - v_{\bm{\pi}}^i}{s}{\infty}}{i}{\infty} \leq \epsilon.$\normalsize
\end{definition}
Interestingly, when considering an MDP, the definition of an $\epsilon$-Nash equilibrium is reduced to control the $\cL_{+\infty}$-norm between the value of the players' strategy and the optimal value. However, it is known that approximate dynamic programming algorithms do not control a $\cL_{+\infty}$-norm but rather an $\cL_p$-norm~\cite{munos2008finite} (we take the definition of the $\cL_p$-norm of~\cite{Piot_DC_NIPS}).
%
Using $\cL_p$ -norm is necessary for approximate dynamic programming algorithms to use complexity bounds from learning theory~\cite{Piot_DC_NIPS}.
The convergence of these algorithms was analyzed using supervised learning bounds in $\cL_p$-norm and thus guaranties are given in $\cL_p$-norm~\cite{scherrer2012approximate}. In addition, Bellman residual approaches on MDPs also give guaranties in $\cL_p$-norm~\cite{maillard2010finite,Piot_DC_NIPS}.
Thus, we define a natural relaxation of the previous definition of the $\epsilon$-Nash equilibrium in $\cL_p$-norm which is consistent with the existing work on MDPs.
\begin{definition}
In a MG, $\bm{\pi}$ is a weak $\epsilon$-Nash equilibrium if:
\small$\normp{\normp{v^{*i}_{\bm{\pi^{\unaryminus i}}} - v_{\bm{\pi}}^i}{\mu(s)}{p}}{\rho(i)}{p} \leq \epsilon.$\normalsize
\end{definition}
One should notice that an $\epsilon$-Nash equilibrium is a weak $\epsilon$-Nash equilibrium.
Conversely, a weak $\epsilon$-Nash equilibrium is not always an $\epsilon$-Nash equilibrium.
Furthermore, both $\epsilon$ do not need to be equal. 
The notion of weak $\epsilon$-Nash equilibrium defines a performance criterion to evaluate a strategy while seeking for a Nash equilibrium.
Thus, this definition has the great advantage to provide a convincing way to evaluate the final strategy. In the case of an MDP, it states that a weak $\epsilon$-Nash equilibrium only consists in controlling the difference in $\cL_p$-norm between the optimal value and the value of the learned strategy. For an MG, this criterion is an $\cL_p$-norm over players ($i \in \{1,\dots,N\}$) of an $\cL_p$-norm over states ($s \in S$) of the difference between the value of the joint strategy $\bm{\pi}$ for player $i$ in state $s$ and of his best response against the joint strategy $\bm{\pi^{\unaryminus i}}$.
In the following, we consider that learning a Nash equilibrium should result in minimizing the loss \small$\normp{\normp{v^{*i}_{\bm{\pi^{\unaryminus i}}} - v_{\bm{\pi}}^i}{\mu(s)}{p}}{\rho(i)}{p}$\normalsize. For each player, we want to minimize the difference between the value of his strategy and a best response considering the strategy of the other players is fixed. However, a direct minimization of that norm is not possible in the batch setting even for MDPs. Indeed, $v^{*i}_{\bm{\pi^{\unaryminus i}}}$ cannot be directly observed and be used as a target. A common strategy to alleviate this problem is to minimize a surrogate loss. In MDPs, a possible surrogate loss is the optimal Bellman residual $\|v-T^* v\|_{\mu,p}$ (where $\mu$ is a distribution over states and $T^*$ is the optimal Bellman operator for MDPs)~\cite{Piot_DC_NIPS,baird1995residual}. The optimal policy is then extracted from the learnt optimal value (or $Q$-value in general). In the following section, we extend this Bellman residual approach to MGs.

%
\vspace{-0.6em}
\section{Bellman Residual Minimization in MGs}
\vspace{-0.6em}
Optimal Bellman residual minimization is not straightforwardly extensible to MGs because multiplayer strategies cannot be directly extracted from value functions as shown in~\cite{zinkevich2006cyclic}. Yet, from Definition \ref{DefNashBellman}, we know that a joint strategy $\bm{\pi}$ is a Nash equilibrium if there exists $ \bm{v}$ such that, for any player $i$, $v^i$ is the value of the joint strategy $\bm{\pi}$ for player $i$ (i.e. $\cT^i_{\bm{\pi}} v^i = v^i$) and $v^i$ is the value of the best response player $i$ can achieve regarding the opponent's strategy $\bm{\pi^{\unaryminus i}}$ (i.e. $\cT^{*i}_{\bm{\pi^{\unaryminus i}}} v^i = v^i$). We thus propose to build a second Bellman-like residual optimizing over the set of strategies so as to directly learn an $\epsilon$-Nash equilibrium. The first (traditional) residual (i.e. $\normp{\cT^{*i}_{\bm{\pi^{\unaryminus i}}} v^i - v^i}{\rho(i)}{p}$) forces the value of each player to be close to their respective best response to every other player while the second residual (i.e. $\normp{\cT^i_{\bm{\pi}} v^i - v^i}{\rho(i)}{p}$) will force every player to play the strategy corresponding to that value. 

One can thus wonder how close from a Nash-Equilibrium $\bm{\pi}$ would be if there existed $\bm{v}$ such that $\cT^{*i}_{\bm{\pi^{\unaryminus i}}} v^i \approx v^i$ and $\cT^i_{\bm{\pi}} v^i \approx v^i$. In this section, we prove that, if we are able to control over $(\bm{v},\bm{\pi})$ a sum of the $\cL_p$-norm of the associated Bellman residuals (\small$\normp{\cT^{*i}_{\bm{\pi^{\unaryminus i}}} v^i - v^i}{\mu}{p}$\normalsize and \small$\normp{\cT^i_{\bm{\pi}} v^i - v^i}{\mu}{p}$\normalsize), then we are able to control \small$\normp{\normp{ v^{*i}_{\bm{\pi^{\unaryminus i}}} - v_{\bm{\pi}}^i}{\mu(s)}{p}}{\rho(i)}{p}$\normalsize.
%

\begin{theorem}
\label{theorem}
$\forall p, p'$ positive reals such that $\frac{1}{p}+\frac{1}{p'} = 1$ and $\forall (v^1,\dots,v^N)$:
\small
\begin{align}
&\normp{\normp{v^{*i}_{\bm{\pi^{\unaryminus i}}} - v_{\bm{\pi}}^i}{\mu(s)}{p}}{\rho(i)}{p} \leq \frac{2^{\frac{1}{p'}} C_{\infty}(\mu,\nu)^{\frac{1}{p}}}{1-\gamma}\\
& \quad \times \left[ \sum \limits_{i=1}^{N} \rho(i)\left(\normp{\cT^{*i}_{\bm{\pi^{\unaryminus i}}} v^i - v^i}{\nu}{p}^p+\normp{\cT^i_{\bm{\pi}} v^i - v^i}{\nu}{p}^p\right) \right]^{\frac{1}{p}},
\end{align}
\normalsize
with the following concentrability coefficient (the norm of a Radon-Nikodym derivative) \\
\small$ C_{\infty}(\mu,\nu,\pi^i,\bm{\pi^{\unaryminus i}}) = \normp{\frac{\partial \mu^T (1-\gamma)(\cI - \gamma \cP_{\pi^i,\bm{\pi^{\unaryminus i}}})^{-1}}{\partial \nu^T}}{\nu}{\infty}$\normalsize and \small$C_{\infty}(\mu,\nu) = \sup_{\bm{\pi}} C_{\infty}(\mu,\nu,\pi^i,\bm{\pi^{\unaryminus i}})$\normalsize.
\end{theorem}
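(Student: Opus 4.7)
The plan is to begin with the pointwise decomposition $v^{*i}_{\bm{\pi^{\unaryminus i}}} - v^i_{\bm{\pi}} = (v^{*i}_{\bm{\pi^{\unaryminus i}}} - v^i) + (v^i - v^i_{\bm{\pi}})$ and to bound each summand by a Bellman residual using the standard MDP identity. For the right summand, the fixed-point relation $v^i_{\bm{\pi}} = \cT^i_{\bm{\pi}} v^i_{\bm{\pi}}$ rearranges to $v^i - v^i_{\bm{\pi}} = (\cI - \gamma\cP_{\bm{\pi}})^{-1}(v^i - \cT^i_{\bm{\pi}} v^i)$. For the left summand, let $\hat\pi^i$ be a best response to $\bm{\pi^{\unaryminus i}}$ so that $\cT^{*i}_{\bm{\pi^{\unaryminus i}}} v^{*i}_{\bm{\pi^{\unaryminus i}}} = \cT^i_{\hat\pi^i,\bm{\pi^{\unaryminus i}}} v^{*i}_{\bm{\pi^{\unaryminus i}}}$; the same manipulation then yields $v^{*i}_{\bm{\pi^{\unaryminus i}}} - v^i = (\cI - \gamma\cP_{\hat\pi^i,\bm{\pi^{\unaryminus i}}})^{-1}(\cT^i_{\hat\pi^i,\bm{\pi^{\unaryminus i}}} v^i - v^i)$, and since $\cT^i_{\hat\pi^i,\bm{\pi^{\unaryminus i}}} v^i \leq \cT^{*i}_{\bm{\pi^{\unaryminus i}}} v^i$ pointwise and $(\cI - \gamma\cP)^{-1}$ is a positive operator, I replace the right factor by $\cT^{*i}_{\bm{\pi^{\unaryminus i}}} v^i - v^i$ at the cost of an inequality.

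Summing the two bounds and exploiting $v^{*i}_{\bm{\pi^{\unaryminus i}}} - v^i_{\bm{\pi}} \geq 0$ together with the positivity of $(\cI - \gamma\cP)^{-1}$, I would majorize each bracketed residual by its absolute value to obtain the key pointwise estimate $|v^{*i}_{\bm{\pi^{\unaryminus i}}} - v^i_{\bm{\pi}}| \leq (\cI - \gamma\cP_{\hat\pi^i,\bm{\pi^{\unaryminus i}}})^{-1}|\cT^{*i}_{\bm{\pi^{\unaryminus i}}} v^i - v^i| + (\cI - \gamma\cP_{\bm{\pi}})^{-1}|\cT^i_{\bm{\pi}} v^i - v^i|$. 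This inequality is what lets us trade the two Bellman residuals against the value gap through a discounted occupancy kernel, and it is the crux of the whole argument.

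To convert this into an $\cL_p(\mu)$ bound, I write $(1-\gamma)(\cI - \gamma\cP)^{-1} = (1-\gamma)\sum_{t\geq 0} \gamma^t \cP^t$, which is a genuine stochastic kernel on $S$. Jensen's inequality applied inside the $p$-th power pulls the residual out of this convex combination at the cost of a factor $(1-\gamma)^{-p}$, producing an integral against the discounted occupation measure $\mu^T(1-\gamma)(\cI - \gamma\cP_{\cdot})^{-1}$. A Radon--Nikodym change of measure to $\nu$ introduces the concentrability coefficient $C_\infty(\mu,\nu,\hat\pi^i,\bm{\pi^{\unaryminus i}})$, which is uniformly bounded by $C_\infty(\mu,\nu)$. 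The elementary inequality $(a+b)^p \leq 2^{p-1}(a^p+b^p) = 2^{p/p'}(a^p+b^p)$ then separates the two residuals cleanly. Finally, averaging the resulting $p$-th powers over players against the weights $\rho(i)$ and taking the $p$-th root yields the stated bound with prefactor $2^{1/p'} C_\infty(\mu,\nu)^{1/p}/(1-\gamma)$.

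The only step I expect to require real care is the pointwise estimate on the best-response gap: one has to insert the domination $\cT^i_{\hat\pi^i,\bm{\pi^{\unaryminus i}}} v^i \leq \cT^{*i}_{\bm{\pi^{\unaryminus i}}} v^i$ \emph{before} inverting $\cI - \gamma\cP_{\hat\pi^i,\bm{\pi^{\unaryminus i}}}$ and rely on the positivity of the resolvent to preserve the direction of the inequality, so that the initially one-sided bound can later be upgraded to an absolute-value bound precisely because $v^{*i}_{\bm{\pi^{\unaryminus i}}} - v^i_{\bm{\pi}}$ is itself non-negative. Everything downstream is a routine Jensen plus change-of-measure argument lifted directly from the approximate dynamic programming toolbox.
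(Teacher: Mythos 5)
Your proposal is correct and follows essentially the same route as the paper's own proof: the same decomposition through an intermediate $v^i$, the same resolvent identities with the domination $\cT^i_{\hat\pi^i,\bm{\pi^{\unaryminus i}}} v^i \leq \cT^{*i}_{\bm{\pi^{\unaryminus i}}} v^i$ inserted before inverting, the same Jensen-plus-change-of-measure step yielding $C_\infty(\mu,\nu)^{1/p}/(1-\gamma)$, and the same Hölder-type combination producing the $2^{1/p'}$ factor (your $(a+b)^p \leq 2^{p-1}(a^p+b^p)$ is the equal-weights case of the Hölder step in the paper's lemma). No gaps.
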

\begin{proof}
The proof is left in appendix \ref{proofLemma}.
\end{proof}

This theorem shows that an $\epsilon$-Nash equilibrium can be controlled by the sum over the players of the sum of the norm of two Bellman-like residuals: the Bellman Residual of the best response of each player and the Bellman residual of the joint strategy.
If the residual of the best response of player $i$ ($\normp{\cT^{*i}_{\bm{\pi^{\unaryminus i}}} v^i - v^i}{\nu}{p}^p$) is small, then the value $v^i$ is close to the value of the best response $v^{*i}_{\bm{\pi^{\unaryminus i}}}$ and if the Bellman residual of the joint strategy $\normp{\cT^i_{\bm{\pi}} v^i - v^i}{\nu}{p}^p$ for player $i$ is small, then $v^i$ is close to $v_{\bm{\pi}}^i$. In the end, if all those residuals are small, the joint strategy is an $\epsilon$-Nash equilibrium with $\epsilon$ small since $v^{*i}_{\bm{\pi^{\unaryminus i}}} \simeq v^i \simeq v_{\bm{\pi}}^i$.

Theorem~\ref{theorem} also emphasizes the necessity of a weakened notion of an $\epsilon$-Nash equilibrium. It is much easier to control a $\cL_p$-norm than a $\cL_{\infty}$-norm with samples. In the following, the weighted sum of the norms of the two Bellman residuals will be noted as:

\noindent \small$f_{\nu,\rho,p}(\bm{\pi},\bm{v}) = \sum \limits_{i=1}^{N} \rho(i)\left(\normp{\cT^{*i}_{\bm{\pi^{\unaryminus i}}} v^i - v^i}{\nu}{p}^p+\normp{\cT^i_{\bm{\pi}} v^i - v^i}{\nu}{p}^p\right)$\normalsize.

Finding a Nash equilibrium is then reduced to a non-convex optimization problem. If we can find a $(\bm{\pi},\bm{v})$ such that $f_{\nu,\rho,p}(\bm{\pi},\bm{v}) = 0$, then the joint strategy $\bm{\pi}$ is a Nash equilibrium.
This procedure relies on a search over the joint value function space and the joint strategy space.
Besides, if the state space or the number of joint actions is too large, the search over values and strategies might be intractable. 
We addressed the issue by making use of approximate value functions and strategies.
Actually, Theorem~\ref{theorem} can be extended with function approximation. A good joint strategy $\bm{\pi}_\theta$ within an approximate strategy space $\Pi$ can still be found by computing $\bm{\pi_\theta},\bm{v_\eta} \in \argmin \limits_{\bm{\theta}, \; \bm{\eta}} f_{\nu,\rho,p}(\bm{\pi_\theta},\bm{v_\eta})$ (where $\bm{\theta}$ and $\bm{\eta}$ respectively parameterize $\bm{\pi_\theta},\bm{v_\eta}$). Even with function approximation, the learned joint strategy $\bm{\pi}_\theta$ would be at least a weak $\epsilon$-Nash equilibrium (with \small$\epsilon \leq \frac{2^{\frac{1}{p'}} C_{\infty}(\mu,\nu)^{\frac{1}{p}}}{1-\gamma} f_{\nu,\rho,p}(\bm{\pi_\theta},\bm{v_\eta})$\normalsize).
This is, to our knowledge, the first approach to solve MGs within an approximate strategy space and an approximate value function space.
\vspace{-0.5em}
\section{The Batch Scenario}
\vspace{-0.5em}
\label{Batch_section}
In this section, we explain how to learn a weak $\epsilon$-Nash equilibrium from Theorem \ref{theorem} with approximate strategies and an approximate value functions. As said previously, we will focus on the batch setting where only a finite number of historical data sampled from an MG are available.

In the batch scenario, it is common to work on state-action value functions (also named $Q$-functions). 
In MDPs, the $Q$-function is defined on the state and the action of one agent. In MGs, the $Q$-function has to be extended over the joint action of the agents~\cite{perolat,hu2003nash}.
Thus, the $Q$-function in MGs for a fixed joint strategy $\bm{\pi}$ is the expected $\gamma$-discounted sum of rewards considering that the players first pick the joint action $\bm{a}$ and follow the joint strategy $\bm{\pi}$. Formally, the $Q$-function is defined as $Q^i_{\bm{\pi}}(s,\mathbf{a}) = \cT^i_{\mathbf{a}} v^i_{\bm{\pi}}$. Moreover, one can define two analogue Bellman operators to the ones defined for the value function: 

\small
\begin{align}
\cB^i_{\bm{\pi}} Q \; (s,\mathbf{a}) &= r^i(s,\mathbf{a}) + \sum \limits_{s' \in S} p(s'|s,\mathbf{a}) E_{\mathbf{b} \sim \bm{\pi}}[Q(s',\mathbf{b})] \\
\cB^{*i}_{\bm{\pi}} Q \; (s,\mathbf{a}) &= r^i(s,\mathbf{a}) + \\
&\sum \limits_{s' \in S} p(s'|s,\mathbf{a}) \max \limits_{b^i } \left[ E_{\mathbf{\bm{b^{\unaryminus i}}} \sim \bm{\pi^{\unaryminus i}}}[Q(s',b^i, \bm{b^{\unaryminus i}})] \right].
\end{align}
\normalsize
As for the value function, $Q^i_{\bm{\pi}}$ is the fixed point of $\cB^i_{\bm{\pi}}$ and $Q^{*i}_{\bm{\pi^{\unaryminus i}}}$ is the fixed point of $\cB^{*i}_{\bm{\pi}}$ (where $Q^{*i}_{\bm{\pi^{\unaryminus i}}} = \max \limits_{\pi^i} Q^i_{\bm{\pi}}$).
The extension of Theorem \ref{theorem} to $Q$-functions instead of value functions is straightforward.
Thus, we will have to minimize the following function depending on strategies and $Q$-functions:

\small
\begin{align}
f(\mathbf{Q},\bm{\pi}) = \sum \limits_{i=1}^{N} \rho(i)\left(\normp{\cB^{*i}_{\bm{\pi^{\unaryminus i}}} Q^i - Q^i}{\nu}{p}^p+\normp{\cB^i_{\bm{\pi}} Q^i - Q^i}{\nu}{p}^p\right) \label{Q_function_batch}
\end{align}
\normalsize



The batch scenario consists in having a set of $k$ samples $(s_j,(a_j^1,...,a_j^N),(r_j^1,...,r_j^N),s'_j)_{j \in \{1,...,k\}}$ where $r_j^i = r^i(s_j,a_j^1,...,a_j^N)$ and where the next state is sampled according to $p(.|s_j,a_j^1,...,a_j^N)$.
From Equation~\eqref{Q_function_batch}, we can minimize the empirical-norm by using the $k$ samples to obtain the empirical estimator of the Bellman residual error. 
\small
\begin{align}
\tilde{f}_k(\mathbf{Q},\bm{\pi}) =\sum \limits_{j=1}^k &\sum \limits_{i=1}^{N} \rho(i) \Bigg[\abs{\cB^{*i}_{\bm{\pi^{\unaryminus i}}} Q^i (s_j,\mathbf{a}_j) - Q^i(s_j,\mathbf{a}_j)}^p\\
&+\abs{\cB^i_{\bm{\pi}} Q^i (s_j,\mathbf{a}_j) - Q^i (s_j,\mathbf{a}_j)}^p\Bigg], \label{Batch_Q_func}
\end{align}
\normalsize
For more details, an extensive analysis beyond the minimization of the Bellman residual in MDPs can be found in \cite{Piot_DC_NIPS}. In the following we discuss the estimation of the two Bellman residuals \small$\abs{\cB^{*i}_{\bm{\pi^{\unaryminus i}}} Q^i (s_j,\mathbf{a}_j) - Q^i(s_j,\mathbf{a}_j)}^p$\normalsize and \small$\abs{\cB^i_{\bm{\pi}} Q^i (s_j,\mathbf{a}_j) - Q^i (s_j,\mathbf{a}_j)}^p$\normalsize in different cases.

\paragraph{Deterministic Dynamics:}
With deterministic dynamics, the estimation is straightforward. We estimate \small$\cB^i_{\bm{\pi}} Q^i (s_j,\mathbf{a}_j)$\normalsize~with \small$r_j^i +\gamma E_{\mathbf{b} \sim \bm{\pi}}[Q^i(s'_j,\mathbf{b})]$\normalsize~and \small$\cB^{*i}_{\bm{\pi^{\unaryminus i}}} Q^i (s_j,\mathbf{a}_j)$\normalsize~with \small$r_j^i +\gamma \max \limits_{b^i } \left[ E_{\bm{b^{\unaryminus i}} \sim \bm{\pi^{\unaryminus i}}}[Q^i(s'_j,b^i, \bm{b^{\unaryminus i}})] \right]$\normalsize, where the expectation are:

\small
\begin{align*}
&E_{\mathbf{b} \sim \bm{\pi}}[Q^i(s',\mathbf{b})]=\\ 
&\sum \limits_{b^1 \in A^1} \dots \sum \limits_{b^N \in A^N} \pi^1(b^1|s') \dots \pi^N(b^N|s') Q^i(s',b^1,\dots,b^N)
\end{align*}
\normalsize and where
\small
\begin{align*}
&E_{\bm{b^{\unaryminus i}} \sim \bm{\pi^{\unaryminus i}}}[Q^i(s',b^i, \bm{b^{\unaryminus i}})] =\sum \limits_{b^1 \in A^1} \dots \sum \limits_{b^{i-1} \in A^{i-1}} \sum \limits_{b^{i+1} \in A^{i+1}} \dots \\ 
&\sum \limits_{b^N \in A^N} \pi^1(b^1|s') \dots \pi^{i-1}(b^{i-1}|s') \pi^{i+1}(b^{i+1}|s') \dots\\ 
&\quad\quad\quad\quad\quad\quad\quad\quad \times \pi^N(b^N|s') Q^i(s',b^1,\dots,b^N)
\end{align*} 
\label{eq:compute_expectation}
\normalsize
Please note that this computation can be turned into tensor operations as described in Appendix-\ref{seq:curves}.  


\paragraph{Stochastic Dynamics:}
In the case of stochastic dynamics, the previous estimator~\eqref{Q_function_batch} is known to be biased~\cite{maillard2010finite,Piot_DC_NIPS}.
If the dynamic is known, one can use the following unbiased estimator: \small$\cB^i_{\bm{\pi}} Q^i (s_j,\mathbf{a}_j)$\normalsize~is estimated with \small$r_j^i +\gamma \sum \limits_{s' \in S} p(s'|s_j,\mathbf{a}_j) E_{\mathbf{b} \sim \bm{\pi}}[Q^i(s',\mathbf{b})]$\normalsize~and \small$\cB^{*i}_{\bm{\pi^{\unaryminus i}}} Q^i (s_j,\mathbf{a}_j)$\normalsize~with \small$r_j^i +\gamma \sum \limits_{s' \in S} p(s'|s_j,\mathbf{a}_j) \max \limits_{b^i } \left[ E_{\bm{b^{\unaryminus i}} \sim \bm{\pi^{\unaryminus i}}}[Q^i(s',b^i, \bm{b^{\unaryminus i}})] \right]$\normalsize.

If the dynamic of the game is not known (e.g. batch scenario), the unbiased estimator cannot be used since the kernel of the MG is required and other techniques must be applied. 
One idea would be to first learn an approximation of this kernel. For instance, one could extend the MDP techniques to MGs such as embedding a kernel in a Reproducing Kernel Hilbert Space (RKHS)~\cite{grunewalder2012modelling,piot_BoostedBellmanResidual,Piot_DC_NIPS} or using kernel estimators~\cite{taylor2012value,Piot_DC_NIPS}.
Therefore, $p(s'|s_j,\bm{a}_j)$ would be replaced by an estimator of the dynamics $\tilde{p}(s'|s_j,\bm{a}_j)$ in the previous equation.
If a generative model is available, the issue can be addressed with double sampling as discussed in~\cite{maillard2010finite,Piot_DC_NIPS}.
%
%
\vspace{-0.5em}
\section{Neural Network architecture}
\vspace{-0.5em}
\label{sec:NNA}
Minimizing the sum of Bellman residuals is a challenging problem as the objective function is not convex. It is all the more difficult as both the $Q$-value function and the strategy of every player must be learned independently. Nevertheless, neural networks have been able to provide good solutions to problems that require minimizing non-convex objective such as image classification or speech recognition \cite{lecun2015deep}. Furthermore, neural networks were successfully applied to reinforcement learning to approximate the $Q$-function~\cite{mnih2015human} in one agent MGs with eclectic state representation.

Here, we introduce a novel neural architecture: the NashNetwork\footnote{\url{https://github.com/fstrub95/nash_network}}. For every player, a two-fold network is defined: the $Q$-network that learns a $Q$-value function and a $\pi$-network that learns the stochastic strategy of the players. The $Q$-network is a multilayer perceptron which takes the state representation as input. It outputs the predicted $Q$-values of the individual action such as the network used by \cite{mnih2015human}. Identically, the $\pi$-network is also a multilayer perceptron which takes the state representation as input. It outputs a probability distribution over the action space by using a softmax. We then compute the two Bellman residuals for every player following Equation~\ref{eq:compute_expectation}. Finally, we back-propagate the error by using classic gradient descent operations. 

In our experiments, we focus on deterministic turn-based games. It entails a specific neural architecture that is fully described in Figure~\ref{fig:network}. During the training phase, all the $Q$-networks and the $\pi$-networks of the players are used to minimize the Bellman residual. Once the training is over, only the $\pi$-network is kept to retrieve the strategy of each player. Note that this architecture differs from classic actor-critic networks~\cite{lillicrap2015continuous} for several reasons. Although $Q$-network is an intermediate support to the computation of $\pi$-network, neither policy gradient nor advantage functions are used in our model. Besides, the $Q$-network is simply discarded at the end of the training. The NashNetwork directly searches in the strategy space by minimizing the Bellman residual. 
%
%
\vspace{-0.5em}
\section{Experiments}
\vspace{-0.5em}
\label{sec:experiments}
In this section, we report an empirical evaluation of our method on randomly generated MGs. This class of problems has been first described by~\cite{archibald1995generation} for MDPs and has been studied for the minimization of the optimal Bellman residual~\cite{Piot_DC_NIPS} and in zero-sum two-player MGs~\cite{perolat_non-stat_AISTATS}. First, we extend the class of randomly generated MDPs to general-sum MGs, then we describe the training setting, finally we analyze the results. Without loss of generality we focus on deterministic turn-based MGs for practical reasons. Indeed, in simultaneous games the complexity of the state actions space grows exponentially with the number of player whereas in turn-based MGs it only grows linearly. Besides, as in the case of simultaneous actions, a Nash equilibrium in a turn-based MG (even deterministic) might be a stochastic strategy~\cite{zinkevich2006cyclic}. In a turn-based MG, only one player can choose an action in each state. Finally, we run our experiments on deterministic MGs to avoid bias in the estimator as discussed in Sec.\ref{Batch_section}.

To sum up, we use turn-based games to avoid the exponential growth of the number of actions and deterministic games to avoid bias in the estimator. The techniques described in Sec.\ref{Batch_section} could be implemented with a slight modification of the architecture described in Sec.\ref{sec:NNA}.
\begin{figure*}[ht]
\vspace{-1em}
  \centering
  \includegraphics[width=15.cm]{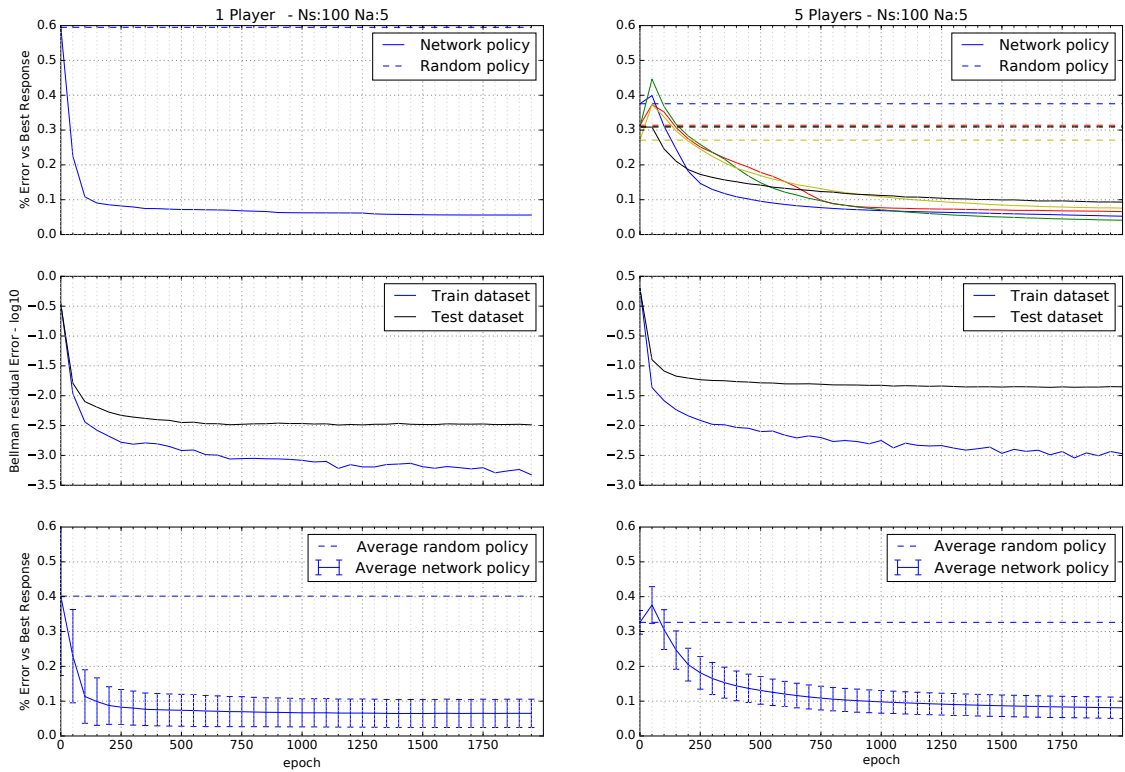}
  \caption{(top) Evolution of the Error vs Best Response during the training for a given Garnet. In both cases, players manage to iteratively improve their strategies. A strong drop may occur when one of the player finds a important move. (middle) Empirical Bellman residual for the training dataset and the testing dataset. It highlights the link between minimizing the Bellman residual and the quality of the learned strategy. (bottom) Average Error vs Best Response averaged over every players, Garnets and every batch. It highlights the robustness of minimizing the Bellman residual over several games. Experiments are run on 5 Garnets which is re-sampled 5 times to average the metrics.}
\vspace{-1em}
  \label{fig:garnet}
\end{figure*}
\vspace{-1em}
\paragraph{Dataset:}
We chose to use artificially generated MGs (named Garnet) to evaluate our method as they are not problem dependent.
They have the great advantage to allow to easily change the state space (and its underlying structure), the action space and the number of players. Furthermore, Garnets have the property to present all the characteristics of a complete MG. Standard games such as "rock-paper-scissors" (no states and zero-sum), or other games such as chess or checkers (deterministic and zero-sum), always rely on a limited number of MG properties. Using classic games would thus have hidden the generality of our approach even though it would have been more appealing. Finally, the underlying model of the dynamics of the Garnet is fully known. Thus, it is possible to investigate whether an $\epsilon$-Nash equilibrium have been reached during the training and to quantify the $\epsilon$. It would have been impossible to do so with more complex games.

An $N$-player Garnet is described by a tuple $(N_S, N_A)$. The constant $N_S$ is the number of states and $N_A$ is the number of actions of the MG. For each state and action $(s,a)$, we randomly pick the next state $s'$ in the neighborhood of indices of $s$ where $s'$ follow a rounded normal distribution $\mathcal{N}(s, \hat{\sigma}_{s'})$. We then build the transition matrix such as $p(s'|s,a) = 1$.
Next, we enforce an arbitrary reward structure into the Garnet to enable the emergence of an Nash equilibrium. Each player $i$ has a random critical state $\hat{s}^i$ and the reward of this player linearly decreases by the distance (encoded by indices) to his critical state. Therefore, the goal of the player is to get as close as possible from his critical state. Some players may have close critical states and may act together while other players have to follow opposite directions.   
A Gaussian noise of standard deviation $\hat{\sigma}_{noise}$ is added to the reward, then we sparsify it with a ratio $m_{noise}$ to harden the task.
Finally, a player is selected uniformly over $\{1,\dots,N\}$ for every state once for all. The resulting vector $v_c$ encodes which player plays at each state as it is a turn-based game.
Concretely, a Garnet works as follows: Given a state $s$, a player is first selected according the vector $v_c(s)$. This player then chooses an action according its strategy $\pi^i$. Once the action is picked, every player $i$ receives its individual reward $r^i(s,a)$. Finally, the game moves to a new state according $p(s'|s,a)$ and a new state's player ($v_c(s')$). Again, the goal of each player is to move as close as possible to its critical state. Thus, players benefit from choosing the action that maximizes its cumulative reward and leads to a state controlled by a non-adversarial player.    

Finally, samples $(s,(a^1,\dots,a^N),(r^1,\dots,r^N),s')$ are generated by randomly selecting a state and an action uniformly. The reward, the next state, and the next player are selected according to the model described above.
\vspace{-1em}
\paragraph{Evaluation:} Our goal is to verify whether the-joint strategy of the players is an $\epsilon$-Nash equilibrium. To do so, we first retrieve the strategy of every player by evaluating the $\pi^i$-networks over the state space. Then, given $\bm{\pi}$, we can exactly evaluate the value of the joint strategy $v^i_{\bm{\pi}}$ for each player $i$ and the value of the best response to the strategy of the others $v^{*i}_{\bm{\pi^{\unaryminus i}}}$. To do so, the value $v^i_{\bm{\pi}}$ is computed by inverting the linear system $v^i_{\bm{\pi}} = (\cI - \gamma \cP_{\bm{\pi}})^{-1}r^i_{\bm{\pi}}$. The value $v^{*i}_{\bm{\pi^{\unaryminus i}}}$ is computed with the policy iteration algorithm. Finally, we compute the \emph{Error vs Best Response} for every player defined as $\frac{\norm{v^i_{\bm{\pi}} - v^{*i}_{\bm{\pi^{\unaryminus i}}}}_2}{\norm{v^{*i}_{\bm{\pi^{\unaryminus i}}}}_2}$. If this metric is close to zero for all players, the players reach a weak $\epsilon$-Nash equilibrium with $\epsilon$ close to zero. Actually, \emph{Error vs Best Response} is a normalized quantification of how sub-optimal the player's strategy is compared to his best response. It indicates by which margin a player would have been able to increase his cumulative rewards by improving his strategy while other players keep playing the same strategies. 
If this metric is close to zero for all players, then they have little incentive to switch from their current strategy. It is an $\epsilon$-Nash equilibrium. In addition, we keep track of the empirical norm of the Bellman residual on both the training dataset and the test dataset as it is our training objective.
\vspace{-1em}
\paragraph{Training parameters:}
We use $N$-player Garnet with 1, 2 or 5 players. The state space and the action space are respectively of size 100 and 5. The state is encoded by a binary vector. The transition kernel is built with a standard deviation $\hat{\sigma}_{s'}$ of 1. 
The reward function is $r^i(s)= \frac{2\min(|s-\hat{s}^i|, N_S)-|s-\hat{s}^i|)}{N_S}$ (it is a circular reward function). The reward sparsity $m_{noise}$ is set to 0.5 and the reward white noise $\hat{\sigma}_{noise}$ has a standard deviation of 0.05. The discount factor $\gamma$ is set to 0.9. The $Q$-networks and $\pi$-networks have one hidden layers of size 80 with RELUs~\cite{Goodfellow-et-al-2016-Book}. $Q$-networks have no output transfer function while $\pi$-networks have a softmax. The gradient descent is performed with AdamGrad~\cite{Goodfellow-et-al-2016-Book} with an initial learning rate of 1e-3 for the $Q$-network and 5e-5 for the $\pi$-networks. We use a weight decay of 1e-6. The training set is composed of $5 N_S N_A$ samples split into random minibatch of size 20 while the testing dataset contains $N_S N_A$ samples. 
The neural network is implemented by using the python framework Tensorflow~\cite{tensorflow2015-whitepaper}. 
The source code is available on Github (Hidden for blind review) to run the experiments.
\vspace{-1em}
\paragraph{Results:}
Results for 1 player (MDP) and 5 players are reported in Figure \ref{fig:garnet}. Additional settings are reported in the Appendix-\ref{seq:curves} such as the two-player case and the tabular cases. In those scenarios, the quality of the learned strategies converges in parallel with the empirical Bellman residual. Once the training is over, the players can increase their cumulative reward by no more than 8\% on average over the state space. Therefore, neural networks succeed in learning a weak $\epsilon$-Nash equilibrium. Note that it is impossible to reach a zero error as (i) we are in the batch setting, (ii) we use function approximation and (iii) we only control a weak $\epsilon$-Nash equilibrium. Moreover, the quality of the strategies are well-balanced among the players as the standard deviation is below 5 points. 

Our neural architecture has good scaling properties. First, scaling from 2 players to 5 results in the same strategy quality. Furthermore, it can be adapted to a wide variety of problems by only changing the bottom of each network to fit with the state representation of the problem.
\vspace{-1em}
\paragraph{Discussions:} 
The neural architecture faces some over-fitting issues. It requires a high number of samples to converge as described on Figure~\ref{fig:sampling}. \cite{lillicrap2015continuous} introduces several tricks that may improve the training. Furthermore, we run additional experiments on non-deterministic Garnets but the result remains less conclusive. Indeed, the estimators of the Bellman residuals are biased for stochastic dynamics. As discussed, embedding the kernel or using kernel estimators may help to estimate properly the cost function (Equation~\eqref{Batch_Q_func}).
Finally, our algorithm only seeks for a single Nash-Equilibrium when several equilibria might exist. Finding a specific equilibrium among others is out of the scope of this paper.  
\begin{figure}
  \centering
  \includegraphics[width=0.90\columnwidth]{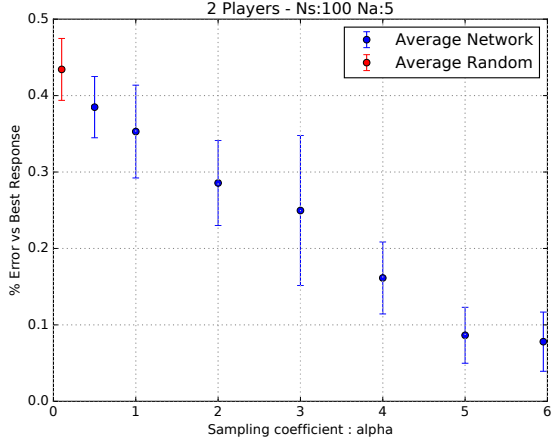}
  \caption{Impact of the number of samples on the quality of the learned strategy. The number of samples per batch is computed by $N_{samples}=\alpha N_A N_S$. Experiments are run on 3 different Garnets which are re-sampled 3 times to average the metrics.}
  \label{fig:sampling}
  \vskip -0.5em
\end{figure}

\vspace{-1.5em}
\section{Conclusion}
\vspace{-0.5em}
In this paper, we present a novel approach to learn a Nash equilibrium in MGs. The contributions of this paper are both theoretical and empirical. First, we define a new (weaker) concept of an $\epsilon$-Nash equilibrium. 
We prove that minimizing the sum of different Bellman residuals is sufficient to learn a weak $\epsilon$-Nash equilibrium. From this result, we provide empirical estimators of these Bellman residuals from batch data. Finally, we describe a novel neural network architecture, called NashNetwork, to learn a Nash equilibrium from batch data. This architecture does not rely on a specific MG. It also scales to a high number of players. Thus it can be applied to a trove of applications.

As future works, the NashNetwork could be extended to more eclectic games such as simultaneous games (i.e. Alesia~\cite{perolat} or a stick together game such as in~\cite{prasad2015two}) or Atari's games with several players such as Pong or Bomber. Additional optimization methods can be studied to this specific class of neural networks to increase the quality of learning. 
Moreover, the $Q$-function and the strategy could be parametrised with other classes of function approximation such as trees. Yet, it requires to study functional gradient descent on the loss function. And finally, future work could explore the use of projected Bellman residual as studied in MDPs and two-player zero-sum MGs~\cite{perolat:Softened}

\section*{Acknowledgments}
This work has received partial funding from CPER Nord-Pas de Calais/FEDER DATA Advanced data science and technologies 2015-2020, the European Commission H2020 framework program under the research Grant number 687831 (BabyRobot) and the Project CHIST-ERA IGLU.


\bibliographystyle{apalike}
\bibliography{biblio}

\begin{thebibliography}{}

\bibitem[Abadi et~al., 2015]{tensorflow2015-whitepaper}
Abadi, M., Agarwal, A., Barham, P., Brevdo, E., Chen, Z., Citro, C., Corrado,
  G.~S., Davis, A., Dean, J., Devin, M., Ghemawat, S., Goodfellow, I., Harp,
  A., Irving, G., Isard, M., Jia, Y., Jozefowicz, R., Kaiser, L., Kudlur, M.,
  Levenberg, J., Man\'{e}, D., Monga, R., Moore, S., Murray, D., Olah, C.,
  Schuster, M., Shlens, J., Steiner, B., Sutskever, I., Talwar, K., Tucker, P.,
  Vanhoucke, V., Vasudevan, V., Vi\'{e}gas, F., Vinyals, O., Warden, P.,
  Wattenberg, M., Wicke, M., Yu, Y., and Zheng, X. (2015).
\newblock {TensorFlow}: Large-scale machine learning on heterogeneous systems.
\newblock Software available from tensorflow.org.

\bibitem[Akchurina, 2010]{akchurina2010multi}
Akchurina, N. (2010).
\newblock {\em {Multi-Agent Reinforcement Learning Algorithms}}.
\newblock PhD thesis, Paderborn, Univ., Diss., 2010.

\bibitem[Archibald et~al., 1995]{archibald1995generation}
Archibald, T., McKinnon, K., and Thomas, L. (1995).
\newblock {On the Generation of Markov Decision Processes}.
\newblock {\em Journal of the Operational Research Society}, 46:354--361.

\bibitem[Baird et~al., 1995]{baird1995residual}
Baird, L. et~al. (1995).
\newblock {Residual Algorithms: Reinforcement Learning with Function
  Approximation}.
\newblock In {\em Proc. of ICML}.

\bibitem[Filar and Vrieze, 2012]{filar2012competitive}
Filar, J. and Vrieze, K. (2012).
\newblock {\em {Competitive Markov Decision Processes}}.
\newblock Springer Science \& Business Media.

\bibitem[Goodfellow et~al., 2016]{Goodfellow-et-al-2016-Book}
Goodfellow, I., Bengio, Y., and Courville, A. (2016).
\newblock {Deep Learning}.
\newblock Book in preparation for MIT Press.

\bibitem[Grunewalder et~al., 2012]{grunewalder2012modelling}
Grunewalder, S., Lever, G., Baldassarre, L., Pontil, M., and Gretton, A.
  (2012).
\newblock {Modelling Transition Dynamics in MDPs With RKHS Embeddings}.
\newblock In {\em Proc. of ICML}.

\bibitem[Hu and Wellman, 2003]{hu2003nash}
Hu, J. and Wellman, M.~P. (2003).
\newblock {Nash Q-Learning for General-Sum Stochastic Games}.
\newblock {\em Journal of Machine Learning Research}, 4:1039--1069.

\bibitem[Lagoudakis and Parr, 2002]{lagoudakis2002value}
Lagoudakis, M.~G. and Parr, R. (2002).
\newblock {Value Function Approximation in Zero-Sum Markov Games}.
\newblock In {\em Proc. of UAI}.

\bibitem[Lagoudakis and Parr, 2003]{lagoudakis2003reinforcement}
Lagoudakis, M.~G. and Parr, R. (2003).
\newblock {Reinforcement Learning as Classification: Leveraging Modern
  Classifiers}.
\newblock In {\em Proc. of ICML}.

\bibitem[LeCun et~al., 2015]{lecun2015deep}
LeCun, Y., Bengio, Y., and Hinton, G. (2015).
\newblock {Deep Learning}.
\newblock {\em Nature}, 521:436--444.

\bibitem[Lillicrap et~al., 2016]{lillicrap2015continuous}
Lillicrap, T.~P., Hunt, J.~J., Pritzel, A., Heess, N., Erez, T., Tassa, Y.,
  Silver, D., and Wierstra, D. (2016).
\newblock {Continuous Control with Deep Reinforcement Learning}.
\newblock In {\em Proc. of ICLR}.

\bibitem[Littman, 1994]{littman1994markov}
Littman, M.~L. (1994).
\newblock {Markov Games as a Framework for Multi-Agent Reinforcement Learning}.
\newblock In {\em Proc. of ICML}.

\bibitem[Littman, 2001]{littman2001friend}
Littman, M.~L. (2001).
\newblock {Friend-or-Foe Q-Learning in General-Sum Games}.
\newblock In {\em Proc. of ICML}.

\bibitem[Maillard et~al., 2010]{maillard2010finite}
Maillard, O.-A., Munos, R., Lazaric, A., and Ghavamzadeh, M. (2010).
\newblock {Finite-Sample Analysis of Bellman Residual Minimization}.
\newblock In {\em Proc. of ACML}.

\bibitem[Mnih et~al., 2015]{mnih2015human}
Mnih, V., Kavukcuoglu, K., Silver, D., Rusu, A.~A., Veness, J., Bellemare,
  M.~G., Graves, A., Riedmiller, M., Fidjeland, A.~K., Ostrovski, G., et~al.
  (2015).
\newblock {Human-Level Control Through Deep Reinforcement Learning}.
\newblock {\em Nature}, 518:529--533.

\bibitem[Munos and Szepesv{\'a}ri, 2008]{munos2008finite}
Munos, R. and Szepesv{\'a}ri, C. (2008).
\newblock {Finite-Time Bounds for Fitted Value Iteration}.
\newblock {\em The Journal of Machine Learning Research}, 9:815--857.

\bibitem[Nisan et~al., 2007]{nisan2007algorithmic}
Nisan, N., Roughgarden, T., Tardos, E., and Vazirani, V.~V. (2007).
\newblock {\em {Algorithmic Game Theory}}, volume~1.
\newblock Cambridge University Press Cambridge.

\bibitem[P{\'e}rolat et~al., 2016]{perolat:Softened}
P{\'e}rolat, J., Piot, B., Geist, M., Scherrer, B., and Pietquin, O. (2016).
\newblock {Softened Approximate Policy Iteration for Markov Games}.
\newblock In {\em Proc. of ICML}.

\bibitem[Perolat et~al., 2016]{perolat_non-stat_AISTATS}
Perolat, J., Piot, B., Scherrer, B., and Pietquin, O. (2016).
\newblock On the use of non-stationary strategies for solving two-player
  zero-sum markov games.
\newblock In {\em Proc. of AISTATS}.

\bibitem[Perolat et~al., 2015]{perolat}
Perolat, J., Scherrer, B., Piot, B., and Pietquin, O. (2015).
\newblock {Approximate Dynamic Programming for Two-Player Zero-Sum Markov
  Games}.
\newblock In {\em Proc. of ICML}.

\bibitem[Piot et~al., 2014a]{piot_BoostedBellmanResidual}
Piot, B., Geist, M., and Pietquin, O. (2014a).
\newblock {Boosted Bellman Residual Minimization Handling Expert
  Demonstrations}.
\newblock In {\em Proc. of ECML}.

\bibitem[Piot et~al., 2014b]{Piot_DC_NIPS}
Piot, B., Geist, M., and Pietquin, O. (2014b).
\newblock Difference of convex functions programming for reinforcement
  learning.
\newblock In {\em Proc. of NIPS}.

\bibitem[Prasad et~al., 2015]{prasad2015two}
Prasad, H., LA, P., and Bhatnagar, S. (2015).
\newblock {Two-Timescale Algorithms for Learning Nash Equilibria in General-Sum
  Stochastic Games}.
\newblock In {\em Proc. of AAMAS}.

\bibitem[Puterman, 1994]{puterman2014markov}
Puterman, M.~L. (1994).
\newblock {\em {Markov Decision Processes: Discrete Stochastic Dynamic
  Programming}}.
\newblock John Wiley \& Sons.

\bibitem[Riedmiller, 2005]{riedmiller2005neural}
Riedmiller, M. (2005).
\newblock {Neural Fitted Q Iteration--First Experiences with a Data Efficient
  Neural Reinforcement Learning Method}.
\newblock In {\em Proc. of ECML}.

\bibitem[Scherrer et~al., 2012]{scherrer2012approximate}
Scherrer, B., Ghavamzadeh, M., Gabillon, V., and Geist, M. (2012).
\newblock {Approximate Modified Policy Iteration}.
\newblock In {\em {Proc. of ICML}}.

\bibitem[Shapley, 1953]{shapley1953stochastic}
Shapley, L.~S. (1953).
\newblock {Stochastic Games}.
\newblock {\em In Proc. of the National Academy of Sciences of the United
  States of America}.

\bibitem[Taylor and Parr, 2012]{taylor2012value}
Taylor, G. and Parr, R. (2012).
\newblock {Value Function Approximation in Noisy Environments Using Locally
  Smoothed Regularized Approximate Linear Programs}.
\newblock In {\em Proc. of UAI}.

\bibitem[Zinkevich et~al., 2006]{zinkevich2006cyclic}
Zinkevich, M., Greenwald, A., and Littman, M. (2006).
\newblock {Cyclic Equilibria in Markov Games}.
\newblock In {\em Proc. of NIPS}.

\end{thebibliography}

\appendix
\onecolumn
\section{Proof of the equivalence of definition \ref{DefNash} and \ref{DefNashBellman}}
\label{proofEqDef}
Proof of the equivalence between definition \ref{DefNashBellman} and \ref{DefNash}.

(\ref{DefNashBellman}) $\Rightarrow$ (\ref{DefNash}):

If $\exists \bm{v}$ such as $\forall i \in \{1,...,N\}, \cT^i_{\bm{\pi}} v^i = v^i \textrm{ and } \cT^{*i}_{\bm{\pi^{\unaryminus i}}} v^i = v^i$, then $\forall i \in \{1,...,N\}$, $v^i = v_{\pi^i,\bm{\pi^{\unaryminus i}}}^i$ and $v^i = \max \limits_{\tilde{\pi^i}} v^i_{\tilde{\pi}^i,\bm{\pi^{\unaryminus i}}}$

(\ref{DefNash}) $\Rightarrow$ (\ref{DefNashBellman}):

if $\forall i \in \{1,...,N\}, \; v_{\pi^i,\bm{\pi^{\unaryminus i}}}^i = \max \limits_{\tilde{\pi}^i} v^i_{\tilde{\pi}^i,\bm{\pi^{\unaryminus i}}}.$, then $\forall i \in \{1,...,N\}$, the value $v^i = v_{\pi^i,\bm{\pi^{\unaryminus i}}}^i$ is such as $\cT^i_{\bm{\pi}} v^i = v^i \textrm{ and } \cT^{*i}_{\bm{\pi^{\unaryminus i}}} v^i = v^i$






\section{Proof of Theorem \ref{theorem}}
\label{proofLemma}

First we will prove the following lemma. The proof is strongly inspired by previous work on the minimization of the Bellman residual for MDPs~\cite{Piot_DC_NIPS}.
\begin{lemma}
let $p$ and $p'$ be a real numbers such that $\frac{1}{p}+\frac{1}{p'} = 1$, then $\forall \mathbf{v}, \bm{\pi}$ and $\forall i \in \{1,...,N\}$:
\small
\begin{align}
&\normp{v^i_{\pi_*^i, \bm{\pi^{\unaryminus i}}} - v^i_{\pi^i, \bm{\pi^{\unaryminus i}}}}{\mu}{p}\\
& \; \leq \frac{1}{1-\gamma} \left( C_{\infty}(\mu,\nu,\pi_*^i,\bm{\pi^{\unaryminus i}})^{\frac{p'}{p}} + C_{\infty}(\mu,\nu,\pi^i,\bm{\pi^{\unaryminus i}})^{\frac{p'}{p}}\right)^{\frac{1}{p'}} \left[\normp{\cT^{*i}_{\bm{\pi^{\unaryminus i}}} v^i - v^i}{\mu}{p}^p+\normp{\cT^i_{\bm{\pi}} v^i - v^i}{\mu}{p}^p\right]^{\frac{1}{p}},
\end{align}
\normalsize
where $\pi_*^i$ is the best response to $\bm{\pi^{\unaryminus i}}$. Meaning $v^i_{\pi_*^i, \bm{\pi^{\unaryminus i}}}$ is the fixed point of $\cT^{*i}_{\bm{\pi^{\unaryminus i}}}$. And with the following concentrability coefficient \small$ C_{\infty}(\mu,\nu,\pi^i,\bm{\pi^{\unaryminus i}}) = \normp{\frac{\partial \mu^T (1-\gamma)(\cI - \gamma \cP_{\pi^i,\bm{\pi^{\unaryminus i}}})^{-1}}{\partial \nu^T}}{\nu}{\infty}$\normalsize.
\end{lemma}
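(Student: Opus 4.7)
The plan is to dominate $v^i_{\pi_*^i,\bm{\pi^{\unaryminus i}}} - v^i_{\pi^i,\bm{\pi^{\unaryminus i}}}$ pointwise by a sum of two non-negative terms, one governed by each Bellman residual, and then pass to the $\cL_p$-norm via a carefully weighted H\"older inequality that produces the exact constant $(C_{\infty}(\mu,\nu,\pi_*^i,\bm{\pi^{\unaryminus i}})^{p'/p} + C_{\infty}(\mu,\nu,\pi^i,\bm{\pi^{\unaryminus i}})^{p'/p})^{1/p'}$. The crucial observation I will use from the start is that $\pi_*^i$ is a best response to $\bm{\pi^{\unaryminus i}}$, so $v^i_{\pi_*^i,\bm{\pi^{\unaryminus i}}} \geq v^i_{\pi^i,\bm{\pi^{\unaryminus i}}}$ componentwise; hence no absolute value is needed on the left-hand side and it suffices to bound the difference from above by something non-negative.

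First I would split $v^i_{\pi_*^i,\bm{\pi^{\unaryminus i}}} - v^i_{\pi^i,\bm{\pi^{\unaryminus i}}} = (v^i_{\pi_*^i,\bm{\pi^{\unaryminus i}}} - v^i) + (v^i - v^i_{\pi^i,\bm{\pi^{\unaryminus i}}})$ and bound each summand. For the policy term, the identity $v^i_{\pi^i,\bm{\pi^{\unaryminus i}}} - v^i = (\cI - \gamma \cP_{\pi^i,\bm{\pi^{\unaryminus i}}})^{-1}(\cT^i_{\bm{\pi}} v^i - v^i)$ together with the non-negativity of the resolvent yields $v^i - v^i_{\pi^i,\bm{\pi^{\unaryminus i}}} \leq (\cI - \gamma \cP_{\pi^i,\bm{\pi^{\unaryminus i}}})^{-1}|\cT^i_{\bm{\pi}} v^i - v^i|$. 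For the best-response term, the fact that $\pi_*^i$ is greedy with respect to $v^i_{\pi_*^i,\bm{\pi^{\unaryminus i}}}$ gives $\cT^{*i}_{\bm{\pi^{\unaryminus i}}} v^i_{\pi_*^i,\bm{\pi^{\unaryminus i}}} = \cT^i_{\pi_*^i,\bm{\pi^{\unaryminus i}}} v^i_{\pi_*^i,\bm{\pi^{\unaryminus i}}}$, and combined with the pointwise inequality $\cT^i_{\pi_*^i,\bm{\pi^{\unaryminus i}}} v^i \leq \cT^{*i}_{\bm{\pi^{\unaryminus i}}} v^i$, a standard telescoping argument gives $v^i_{\pi_*^i,\bm{\pi^{\unaryminus i}}} - v^i \leq (\cI - \gamma \cP_{\pi_*^i,\bm{\pi^{\unaryminus i}}})^{-1}|\cT^{*i}_{\bm{\pi^{\unaryminus i}}} v^i - v^i|$.

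I then multiply through by $1-\gamma$ so that each $M_j := (1-\gamma)(\cI - \gamma \cP_{\cdot})^{-1}$ becomes a row-stochastic kernel (a geometric mixture of powers of $\cP$). Writing $X = |\cT^{*i}_{\bm{\pi^{\unaryminus i}}} v^i - v^i|$, $Y = |\cT^i_{\bm{\pi}} v^i - v^i|$, $C_* = C_{\infty}(\mu,\nu,\pi_*^i,\bm{\pi^{\unaryminus i}})$, $C_\pi = C_{\infty}(\mu,\nu,\pi^i,\bm{\pi^{\unaryminus i}})$, the pointwise bound reads $(1-\gamma)(v^i_{\pi_*^i,\bm{\pi^{\unaryminus i}}} - v^i_{\pi^i,\bm{\pi^{\unaryminus i}}}) \leq M_1 X + M_2 Y$. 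I now apply H\"older on this two-term sum with the specific weighting $a_j = C_j^{1/p}\cdot(a_j/C_j^{1/p})$, which yields $(M_1 X + M_2 Y)^p \leq (C_*^{p'/p} + C_\pi^{p'/p})^{p/p'}\bigl((M_1 X)^p/C_* + (M_2 Y)^p/C_\pi\bigr)$. Integrating against $\mu$, Jensen's inequality for each stochastic kernel ($(M_j Z)(s)^p \leq (M_j Z^p)(s)$) plus the standard concentrability step (H\"older with the $(1,\infty)$ pair) replaces $\mu^T M_j$ by $\nu$ at the cost of a Radon--Nikodym factor $\|\mu^T M_j/\nu\|_{\nu,\infty} = C_j$. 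The $C_*$ and $C_\pi$ in the numerators cancel exactly the $1/C_*$ and $1/C_\pi$ left over from the earlier H\"older step, leaving $\|X\|_{\nu,p}^p + \|Y\|_{\nu,p}^p$. Taking the $p$-th root and dividing by $1-\gamma$ delivers the stated bound.

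The main obstacle I anticipate is the H\"older step: the naive inequality $(a+b)^p \leq 2^{p-1}(a^p + b^p)$ produces a stand-alone factor $2^{1/p'}$ in front of the residuals (consistent with the coarser theorem obtained from the lemma by taking a supremum, but too loose for the lemma itself), whereas the refined form $(C_*^{p'/p} + C_\pi^{p'/p})^{1/p'}$ demands choosing the H\"older weights as $C_j^{1/p}$ so that they precisely offset the concentrability factors picked up by Jensen. Once this bookkeeping is in place, everything else reduces to the resolvent identities and contraction facts already recalled in the excerpt.
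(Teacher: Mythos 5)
Your proof is correct and follows essentially the same route as the paper: the same split through $v^i$, the same resolvent identities and monotonicity bounds, the same Jensen/change-of-measure step for the concentrability coefficients, and the same weighted two-term H\"older inequality producing the constant $(C_*^{p'/p}+C_\pi^{p'/p})^{1/p'}$ (you merely apply H\"older pointwise before integrating, whereas the paper applies it to the scalar sum after taking norms). Your explicit remark that $v^i_{\pi_*^i,\bm{\pi^{\unaryminus i}}} \geq v^i_{\pi^i,\bm{\pi^{\unaryminus i}}}$, so that the one-sided pointwise bound suffices, is a point the paper leaves implicit.
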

\begin{proof}
The proof uses similar techniques as in \cite{Piot_DC_NIPS}. First we have:
\begin{align}
v^i_{\pi^i, \bm{\pi^{\unaryminus i}}} - v^i &= (\cI - \gamma \cP_{\pi^i,\bm{\pi^{\unaryminus i}}})^{-1}(r^i_{\pi^i, \bm{\pi^{\unaryminus i}}} - (\cI - \gamma \cP_{\pi^i,\bm{\pi^{\unaryminus i}}}) v^i),\\
&=(\cI - \gamma \cP_{\pi^i,\bm{\pi^{\unaryminus i}}})^{-1} (\cT^i_{\pi^i,\bm{\pi^{\unaryminus i}}} v^i - v^i).
\end{align}
But we also have:
$$v^i_{\pi_*^i, \bm{\pi^{\unaryminus i}}} - v^i = (\cI - \gamma \cP_{\pi_*^i,\bm{\pi^{\unaryminus i}}})^{-1} (\cT^i_{\pi_*^i,\bm{\pi^{\unaryminus i}}} v^i - v^i),$$
then:
\begin{align}
v^i_{\pi_*^i, \bm{\pi^{\unaryminus i}}} - v^i_{\pi^i, \bm{\pi^{\unaryminus i}}} &= v^i_{\pi_*^i, \bm{\pi^{\unaryminus i}}} - v^i + v^i - v^i_{\pi^i, \bm{\pi^{\unaryminus i}}},\\
&= (\cI - \gamma \cP_{\pi_*^i,\bm{\pi^{\unaryminus i}}})^{-1} (\cT^i_{\pi_*^i,\bm{\pi^{\unaryminus i}}} v^i - v^i) - (\cI - \gamma \cP_{\pi^i,\bm{\pi^{\unaryminus i}}})^{-1} (\cT^i_{\pi^i,\bm{\pi^{\unaryminus i}}} v^i - v^i),\\
&\leq (\cI - \gamma \cP_{\pi_*^i,\bm{\pi^{\unaryminus i}}})^{-1} (\cT^{*i}_{\bm{\pi^{\unaryminus i}}} v^i - v^i) - (\cI - \gamma \cP_{\pi^i,\bm{\pi^{\unaryminus i}}})^{-1} (\cT^i_{\pi^i,\bm{\pi^{\unaryminus i}}} v^i - v^i),\\
&\leq (\cI - \gamma \cP_{\pi_*^i,\bm{\pi^{\unaryminus i}}})^{-1} \abs{\cT^{*i}_{\bm{\pi^{\unaryminus i}}} v^i - v^i} + (\cI - \gamma \cP_{\pi^i,\bm{\pi^{\unaryminus i}}})^{-1} \abs{\cT^i_{\pi^i,\bm{\pi^{\unaryminus i}}} v^i - v^i}.
\end{align}
Finally, using the same technique as the one in \cite{Piot_DC_NIPS}, we get:
\begin{align}
&\normp{v^i_{\pi_*^i, \bm{\pi^{\unaryminus i}}} - v^i_{\pi^i, \bm{\pi^{\unaryminus i}}}}{\mu}{p}\\
&\leq \normp{(\cI - \gamma \cP_{\pi_*^i,\bm{\pi^{\unaryminus i}}})^{-1} \abs{\cT^{*i}_{\bm{\pi^{\unaryminus i}}} v^i - v^i}}{\mu}{p} + \normp{(\cI - \gamma \cP_{\pi^i,\bm{\pi^{\unaryminus i}}})^{-1} \abs{\cT^i_{\pi^i,\bm{\pi^{\unaryminus i}}} v^i - v^i}}{\mu}{p},\\
&\leq \frac{1}{1-\gamma} \left [ C_{\infty}(\mu,\nu,\pi_*^i,\bm{\pi^{\unaryminus i}})^{\frac{1}{p}} \normp{\cT^{*i}_{\bm{\pi^{\unaryminus i}}} v^i - v^i}{\nu}{p} + C_{\infty}(\mu,\nu,\pi^i,\bm{\pi^{\unaryminus i}})^{\frac{1}{p}} \normp{\cT^{*i}_{\bm{\pi^{\unaryminus i}}} v^i - v^i}{\nu}{p}\right],\\
& \leq \frac{1}{1-\gamma} \left( C_{\infty}(\mu,\nu,\pi_*^i,\bm{\pi^{\unaryminus i}})^{\frac{p'}{p}} + C_{\infty}(\mu,\nu,\pi^i,\bm{\pi^{\unaryminus i}})^{\frac{p'}{p}}\right)^{\frac{1}{p'}} \left[\normp{\cT^{*i}_{\bm{\pi^{\unaryminus i}}} v^i - v^i}{\nu}{p}^p+\normp{\cT^i_{\bm{\pi}} v^i - v^i}{\nu}{p}^p\right]^{\frac{1}{p}}.
\end{align}
\end{proof}
Theorem~\ref{theorem} falls in two steps:
\small
\begin{align}
\normp{\normp{\max_{\tilde{\pi}^i} v_{\tilde{\pi}^i,\bm{\pi^{\unaryminus i}}} - v_{\pi}^i}{\mu(s)}{p}}{\rho(i)}{p}
&\leq \frac{1}{1-\gamma} \left[ \max \limits_{i \in \{1,...,N\}} \left( C_{\infty}(\mu,\nu,\pi_*^i,\bm{\pi^{\unaryminus i}})^{\frac{p'}{p}} + C_{\infty}(\mu,\nu,\pi^i,\bm{\pi^{\unaryminus i}})^{\frac{p'}{p}}\right)^{\frac{1}{p'}} \right]\\ 
& \qquad \times \left[ \sum \limits_{i=1}^{N} \rho(i)\left(\normp{\cT^{*i}_{\bm{\pi^{\unaryminus i}}} v^i - v^i}{\nu}{p}^p+\normp{\cT^i_{\bm{\pi}} v^i - v^i}{\nu}{p}^p\right) \right]^{\frac{1}{p}},\\
&\leq \frac{2^{\frac{1}{p'}} C_{\infty}(\mu,\nu)^{\frac{1}{p}}}{1-\gamma} \left[ \sum \limits_{i=1}^{N} \rho(i)\left(\normp{\cT^{*i}_{\bm{\pi^{\unaryminus i}}} v^i - v^i}{\nu}{p}^p+\normp{\cT^i_{\bm{\pi}} v^i - v^i}{\nu}{p}^p\right) \right]^{\frac{1}{p}},
\end{align}
\normalsize
with \small$C_{\infty}(\mu,\nu) = \left( \sup \limits_{\pi^i,\bm{\pi^{\unaryminus i}}} C_{\infty}(\mu,\nu,\pi^i,\bm{\pi^{\unaryminus i}}) \right)$\normalsize

The first inequality is proven using lemma 1 and Holder inequality. The second inequality falls noticing $\forall \pi^i, \bm{\pi^{\unaryminus i}}, \; C_{\infty}(\mu,\nu,\pi^i,\bm{\pi^{\unaryminus i}}) \leq \sup \limits_{\pi^i,\bm{\pi^{\unaryminus i}}} C_{\infty}(\mu,\nu,\pi^i,\bm{\pi^{\unaryminus i}})$.

\section{Additional curves}
\label{seq:curves}

This section provides additional curves regarding the training of the NashNetwork. 

\begin{figure}[ht]
  \centering
  \includegraphics[width=0.5\textwidth]{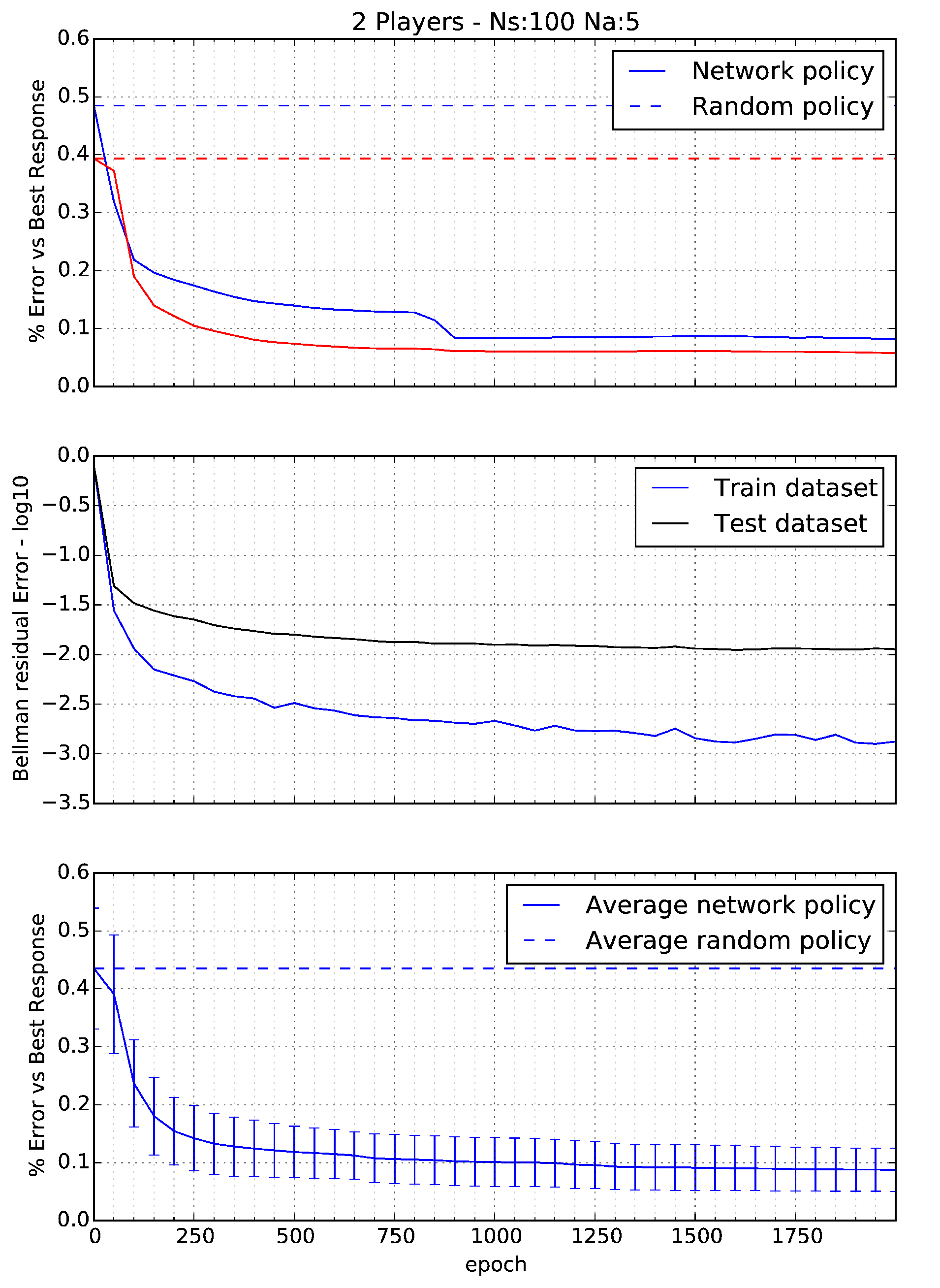}
  \caption{(top) Evolution of the Error vs Best Response during the training for a given Garnet. (middle) Empirical Bellman residual for the training dataset and the testing dataset.  (bottom) Average Error vs Best Response averaged over every players, Garnets and every batch. Experiments are run on 5 Garnets which is re-sampled 5 times to average the metrics.}
  \label{fig:garnet2}
\end{figure}

\begin{figure}[ht]
  \centering
  \includegraphics[width=1\textwidth]{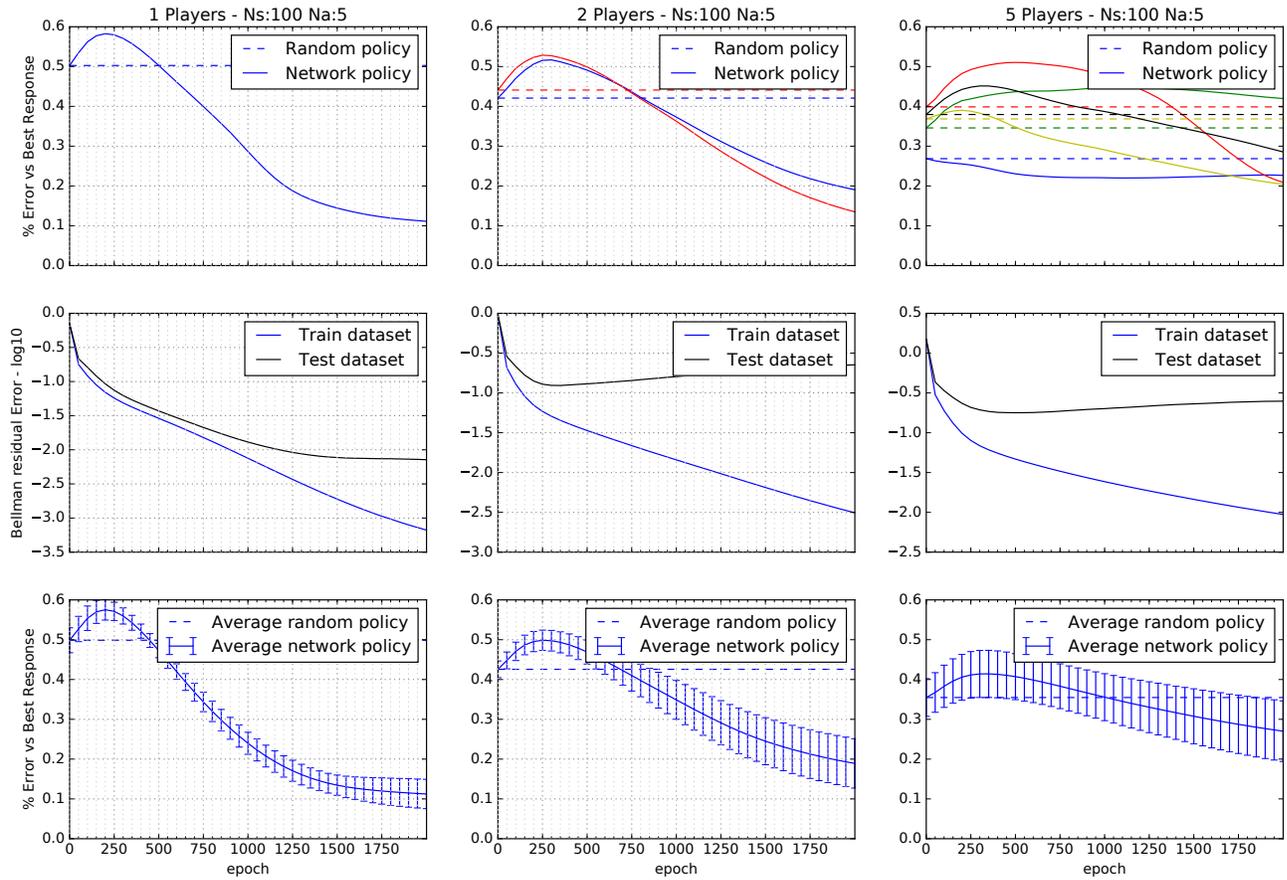}
  \caption{Tabular basis. One may notice that the tabular case works well for a 1 player game (MDP). Yet, the more players there are, the worth it performs. Experiments are run on 5 Garnets which is re-sampled 5 times to average the metrics.}
  \label{fig:garnet_tab}
\end{figure}

\begin{figure}[ht]
  \centering
  \includegraphics[width=0.8\textwidth]{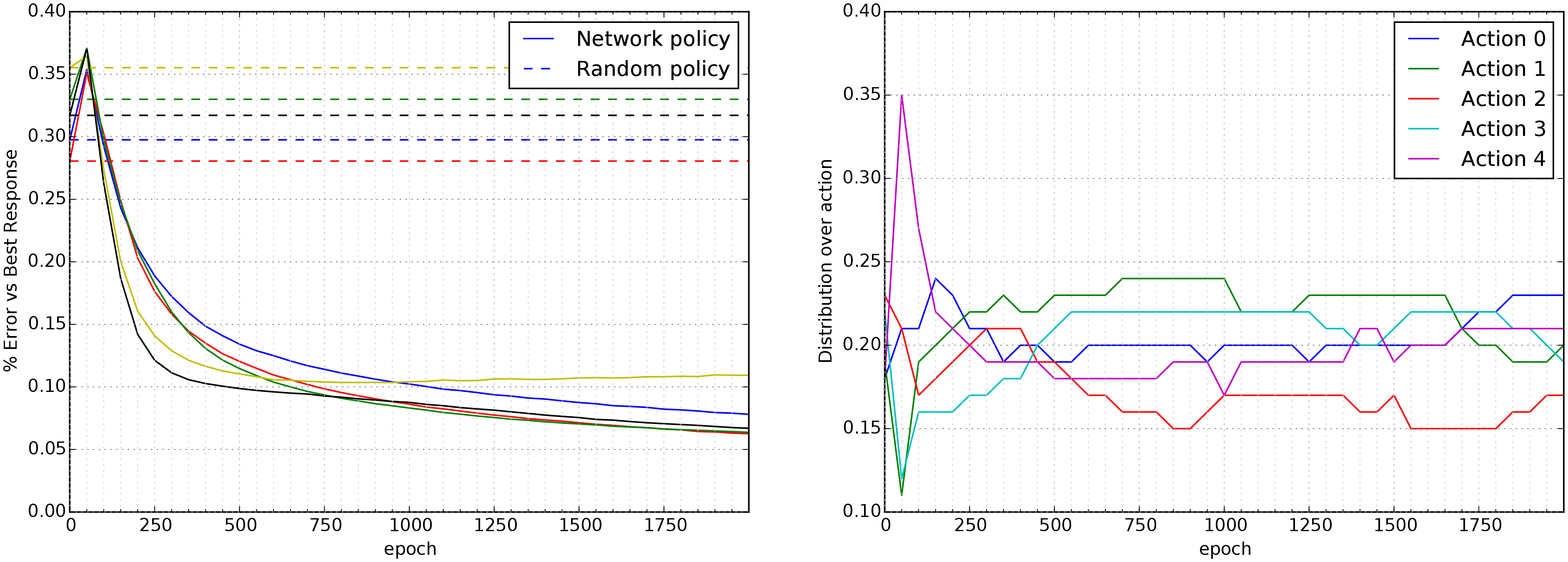}
  \caption{(left)Evolution of the Error vs Best Response during the training for a given Garnet. When the Garnet has a complex structure or the batch is badly distributed, one player sometimes fails to learn a good strategy. (right) Distribution of actions in the strategy among the players with the highest probability. This plots highlights that $\pi$-networks do modify the strategy during the training}
    \label{fig:action}
\end{figure}

\begin{figure}[ht]
  \centering
  \includegraphics[width=0.8\textwidth]{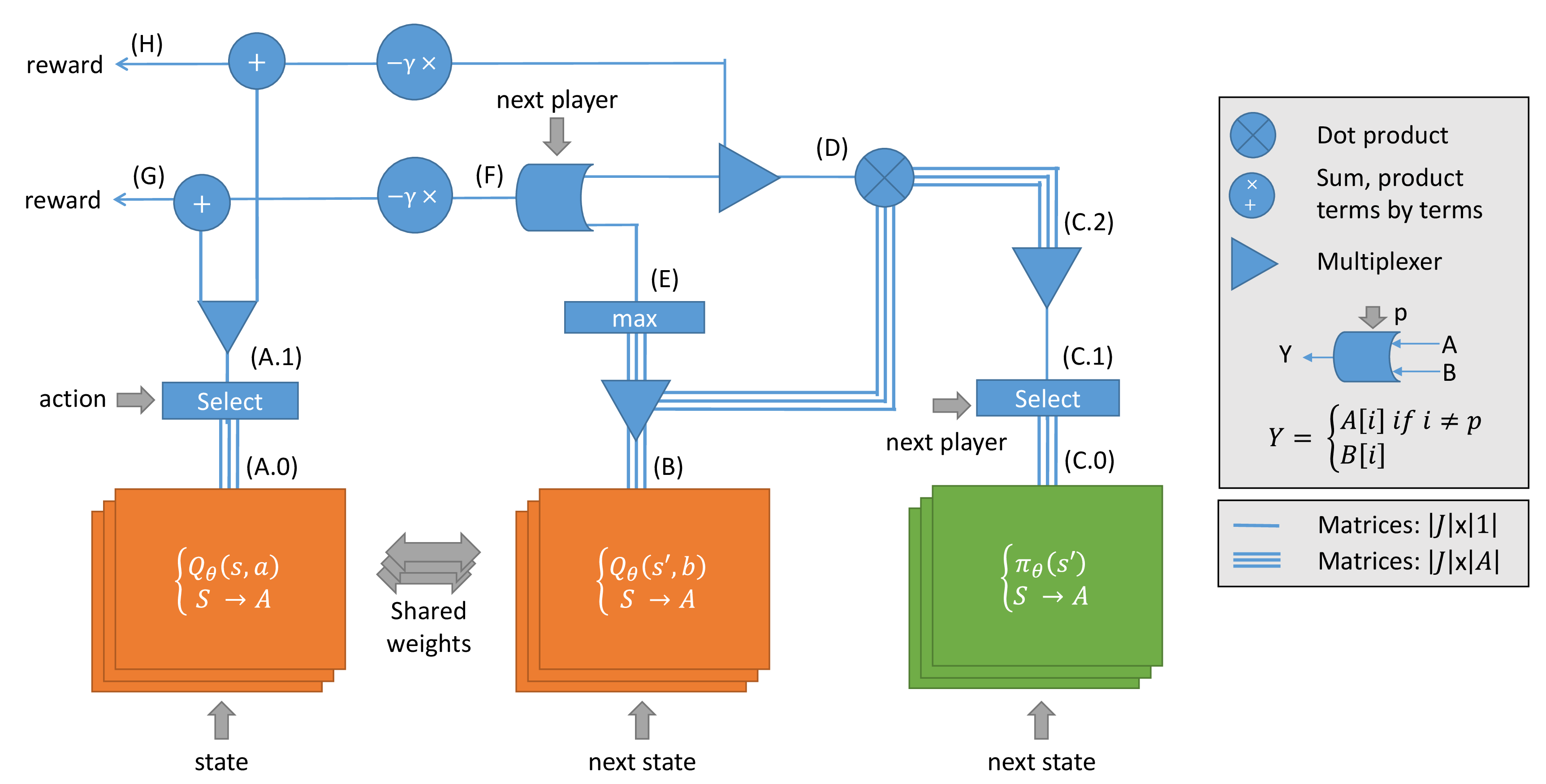}
  \caption{NashNetwork. This scheme summarizes the neural architecture that learns a Nash equilibrium by minimizing the Bellman residual for turn-based games. Each player has two networks : a $Q$-network that learns a state-action value function and a $\pi$-network that learns the strategy of the player. The final loss is an empirical estimate of the sum of Bellman residuals given a batch of data of the following shape $(s,(a^1,\dots,a^N),(r^1,\dots,r^N),s')$. The equation~\eqref{Batch_Q_func} can be divided into key operations that are described below. For each player $i$ : 
first of all, in {\bf(A.0)} we compute $Q^i(s,a^1,\dots,a^N)$ by computing the tensor $Q^i (s, \bm{a})$ and then by selecting the value of $Q^i (s, \bm{a})$ {\bf(A.1)} given the action of the batch. Step {\bf(B)} computes the tensor $Q^i (s', \bm{b})$ and step {\bf(C.0)} computes the strategy $\pi^i (.|s')$. In all Bellman residuals we need to average over the strategy of players $Q^i (s', \bm{b})$. Since we focus on turn-based MGs, we will only average over the strategy of the player controlling next state $s'$ (in the following, this player is called the next player). In {\bf(C.1)} we select the strategy $\pi(.|s')$ of the next player given the batch. In {\bf(C.2)} we duplicate the strategy of the next player for all other players. In {\bf(D)} we compute the dot product between the $Q^i (s', \bm{b})$ and the strategy of the next player to obtain $E_{\mathbf{b} \sim \bm{\pi}}[Q^i(s',\mathbf{b})]$ and in {\bf(E)} we pick the highest expected rewards and obtain $\max \limits_{\bm{b}} Q^i(s',\bm{b})$. Step {\bf (F)} aims at computing $\max \limits_{b^i } \left[ E_{\bm{b^{\unaryminus i}} \sim \bm{\pi^{\unaryminus i}}}[Q^i(s',b^i, \bm{b^{\unaryminus i}})] \right]$ and, since we deal with a turn-based MG, we need to select between the output of {\bf (D)} or {\bf (E)} according to the next player. For all $i$, we either select the one coming from {\bf (E)} if the next player is $i$ or the one from {\bf (D)} otherwise. In {\bf(G)} we compute the error between the reward $r^i$ to $Q^i(s,\mathbf{a}) - \gamma \max \limits_{b^i } \left[ E_{\bm{b^{\unaryminus i}} \sim \bm{\pi^{\unaryminus i}}}[Q^i(s',b^i, \bm{b^{\unaryminus i}})] \right]$ and in {\bf(H)} between $r^i$ and $\gamma E_{\mathbf{b} \sim \bm{\pi}}[Q^i(s',\mathbf{b})] - Q^i(s,\mathbf{a})$. The final loss is the sum of all the residuals.}
\label{fig:network}
\end{figure}
\end{document}